\definecolor{shadecolor}{rgb}{0.8,0.8,0.8}
\newtheorem{theorem}{Theorem}[section]
\newtheorem{proposition}[theorem]{Proposition}
\newtheorem{corollary}[theorem]{Corollary}
\newtheorem{definition}[theorem]{Definition}
\newtheorem{claim}[theorem]{Claim}
\newtheorem{example}[theorem]{Example}
\newcommand{\specexercise}[1]{}
\newenvironment{proof}{{\flushleft \emph{Proof}:}}{\hfill\ding{110}}
\newenvironment{remark}{{\flushleft \fontfamily{pzc}\bfseries\large Remark:}}{}
\newcommand{\Emph}[1]{{\slshape\bfseries #1}}  
\newcommand{\g}{G}
\newcommand{\Vol}{\text{Vol}}
\newcommand{\M}{\mathcal{M}}
\newcommand{\R}{\mathbb{R}}
\newcommand{\vp}{\varphi}
\newcommand{\dist}{\operatorname{dist}}
\newcommand{\SO}{\operatorname{SO}}
\newcommand{\GL}{\operatorname{GL}}
\newcommand{\id}{\operatorname{Id}}
\renewcommand{\det}{\operatorname{det}}
\renewcommand{\P}{\mathcal{P}}
\newcommand{\e}{\varepsilon}
\newcommand{\W}{\Omega}
\newcommand{\calW}{\mathcal{W}}
\newcommand{\beq}{\begin{equation}}
\newcommand{\eeq}{\end{equation}}
\newcommand{\brk}[1]{\left(#1\right)}          
\numberwithin{equation}{section}
\begin{document}

\title{On material-uniform elastic bodies with disclinations and their homogenization}
\author{Cy Maor\footnote{Einstein institute of Mathematics, Hebrew University of Jerusalem.}}
\date{}
\maketitle

{\centering\footnotesize \emph{Dedicated to Marcelo Epstein, in celebration of his 80th birthday.}\par}

\begin{abstract}
In this note, we define material-uniform hyperelastic bodies (in the sense of Noll) containing discrete disclinations and dislocations, and study their properties.
We show in a rigorous way that the size of a disclination is limited by the symmetries of the constitutive relation; in particular, if the symmetry group of the body is discrete, it cannot admit arbitrarily small, yet non-zero, disclinations.
We then discuss the application of these observations to the derivations of models of bodies with continuously-distributed defects.
\end{abstract}

\section{Introduction}
The systematic study of material defects has a long history, dating back to Volterra \cite{Vol07} at the beginning of the 20th century.
Beginning in the mid-20th century, two somewhat different paradigms for describing bodies with defects (either discrete or continuously-distributed) have emerged in the rational mechanics literature.
In the first one, initiated by Nye \cite{Nye53}, Kondo \cite{Kon55}, and Bilby \cite{BBS55}, the elemental object is the body manifold, and its associated geometric fields --- typically a Riemannian metric and an affine connection --- describe the existence of defects in it.
In this approach, the mechanical behavior (i.e., the constitutive relation) should relate to these fields, although the exact  relationship  typically remains quite vague.
In the second paradigm, mainly due to Noll \cite{Nol58} and Wang \cite{Wan67}, the constitutive relation is the elemental object, from which one can associate various geometric fields, under the assumption of material-uniformity (meaning, in a precise sense, that distinct material points have ``the same'' mechanical response).
See \cite{EKM20} for a recent account of the differences and similarities between the approaches.

In this geometric paradigm, a body with continuously-distributed defects is often modeled by a triplet $(\M,\g,\nabla)$, where $(\M,\g)$ is a Riemannian manifold and $\nabla$ is an affine connection.
The connection $\nabla$ is uniquely characterized by three tensor fields: its curvature, torsion and non-metricity tensors (the first two are intrinsic to $\nabla$, the last one is with respect to $\g$).
These fields correspond, in this paradigm, to fields of disclinations, dislocations, and point-defects, respectively (see \cite{MR02,YG12,YG12b,YG13,RG17} for some rather recent accounts).
When specific constitutive relations are considered in these works, they are often neo-Hookean or other isotropic models (e.g., \cite{YG12,YG13}. See \cite{GY18} for examples of non-isotropic models).
It is worth noting, however, that crystalline materials --- arguably the most common examples of defected bodies --- often have discrete symmetry groups (also called isotropy groups) due to their crystalline structure.

Although these models for continuously-distributed defects are well known, the homogenization problem --- that is, the rigorous derivation of a model for continuously-distributed defects from models with discretely distributed ones --- was largely unexplored until the last decade.
For continuously distributed dislocations, this problem was addressed in a series of works that first considered only the geometry of the body manifold \cite{KM15,KM16} and later included the homogenization of its mechanical response \cite{KM16a,EKM20} (see also \cite{KO20} for a different approach to geometric homogenization).
A notable outcome of these homogenization results concerns the symmetry group of the material. Specifically, if the material has a discrete symmetry group, the limiting geometric fields $\g$ and $\nabla$ are uniquely derived from the constitutive relation (i.e., there is a one-to-one correspondence between the paradigms). 
In contrast, if the material is isotropic, the emerging constitutive relation is completely independent of the torsion tensor of $\nabla$, which represents the distribution of dislocations in the geometric paradigm \cite{EKM20}. 
In this case, while the paradigms are consistent, the distribution of dislocations cannot be inferred from the mechanical behavior alone.

The aim of this note is to investigate a framework for similar problems for bodies with disclinations.
Disclinations are both simpler and more complicated than dislocations.
On one hand, their geometry is simpler: it is easy to visualize a body with a disclination (a cone) and to see how such bodies can approximate a smooth surface (in the same sense that a soccer ball approximates a sphere). 
On the other hand, disclinations are more complicated: while a body with dislocations admits global parallelism (i.e., ``lattice directions" can be defined uniquely across the entire body), a body with finitely-many disclinations only admits local parallelism.
Consequently, describing a non-isotropic constitutive relation for such bodies is more challenging.

\paragraph{Main results and structure of the paper.}
The basic modeling assumption of this work is that the elastic body is a continuum (a smooth manifold) that is material-uniform. 
For simplicity, we assume that its constitutive relation is hyperelastic, but impose no additional assumptions (e.g., isotropy).\footnote{One could argue that a more elemental model to start with, in particular in considering crystalline defects, is a discrete model of particles. However discrete-to-continuum limit, unlike homogenization of defects, is quite well-understood; see, e.g., \cite{CvM21}, for a rigorous derivation of a discrete-to-continuum limit in the presence of a disclination.}

In \S\ref{sec:definitions}, we provide a brief overview of hyperelastic material-uniform bodies, including their symmetry groups and the concept of a material connection.
In \S\ref{sec:defects}, we focus on bodies with discrete defects, and show the following:
\begin{enumerate}
\item We present a general definition of a hyperelastic material-uniform body with (discrete) defects.
\item We prove that such bodies admit a locally-flat material connection, which induces a notion of parallelism on any sub-body that does not contain a disclination.
\item We revisit how the disclination and dislocation contents are derived from the material connection.
\item We prove a relationship between the disclination content and the symmetry group of the material.
\end{enumerate}
Finally, in \S\ref{sec:discussion}, we discuss the implications of these results for the homogenization of defects. Specifically:
\begin{itemize}
\item We show that, unlike dislocations, disclinations cannot be homogenized when the symmetry group is discrete, as disclinations cannot be arbitrarily small.
\item For isotropic materials, it is possible to homogenize disclinations, but the limiting constitutive relation depends solely on the limiting Riemannian metric, irrespective of the type of defects from which it was derived.
\end{itemize}
We then revisit the geometric paradigm of continuous distributions of defects, in which defect fields are associated with the tensors of a material connection. 
The observations above reveal the limitations on extending this paradigm --- beyond the theory of dislocation --- to derive effective homogenization models that account for mechanical behavior.

\paragraph{Acknowledgements}
This work is dedicated to Marcelo Epstein, who has immensely shaped my academic journey and serves as an inspiration both as an academic and as a person. 
It is a pleasure to present this paper, directly influenced by our discussions, in honor of his 80th birthday.
The author is grateful to Raz Kupferman for helpful discussions related to this manuscript.
This work was written while the author was visiting the University of Toronto and the Fields Institute for Research in Mathematical Sciences; their support and hospitality is gratefully acknowledged.
The author was partially supported by ISF grant 2304/24 and BSF grant 2022076.

\section{Material-uniform hyperelastic bodies}\label{sec:definitions}

We start by defining a general hyperelastic body: a body whose mechanical behavior can be obtained by minimizing an elastic energy.
\begin{definition}[Hyperelastic body]
\label{def:elastic_body}
A \Emph{hyperelastic body} is a triplet $(\M,W,\Vol)$, consists of a $d$-dimensional oriented differentiable manifold, $\M$---the \Emph{body manifold}---a volume form $\Vol$, and an energy-density function (or \Emph{constitutive relation}),
\[
W: T^*\M\otimes\R^d \to (-\infty,\infty],
\]
which is viewed as a (nonlinear) bundle map over $\M$.\footnote{Allowing for $W$ to obtain the value $+\infty$ is a standard way for excluding some non-physical maps, like orientation-reversing ones. For viewing $W$ as a true bundle map one can think of $ (-\infty,\infty]$ as the trivial bundle $ (-\infty,\infty] \times \M$.}
The \Emph{elastic energy} associated with a configuration $f:\M\to \R^d$ is given by
\beq\label{eq:energy}
E_{(\M,W,\Vol)}(f) = \int_\M W_p(df(p))\, \Vol(p).
\eeq
\end{definition}
While it is not necessary from the definition, we typically (and in this paper) assume that $\M$ can be covered in a single chart.\footnote{Note that in \cite{EKM20} the volume-form is not part of the definition of the hyperelastic body, yet it is required implicitly, when writing the elastic energy.}

In what follows we will need the notion of $\R^d$-valued forms: 
An $\R^d$-valued $k$-form over a manifold $\M$ is a section of $\bigwedge^k T^*\M\otimes \R^d$.
For example, if $f:\M\to \R^d$, then its derivative $df$ is an $\R^d$-valued one form.
$\R^d$-valued $k$-forms can be considered as $d$-tuples of (standard) $k$-forms, and as such, all the standard operations on forms extend to them component-wise: an exterior derivative of an $\R^d$-valued $k$-form is an $\R^d$-valued $(k+1)$-form, the form is closed if it exterior derivative vanishes, and it is exact if it is an exterior derivative of an $\R^d$-valued $(k-1)$-form.
Similarly, one can integrate an $\R^d$-valued $1$-form over a curve (obtaining a vector in $\R^d$), and so on.

Our focus is on bodies in which ``all the material points behave the same way''.
This property is called \Emph{material-uniformity} and was developed by Noll \cite{Nol58} and Wang \cite{Wan67}.
For hyperelastic bodies it reads as follows (see also \cite{EKM20}):\footnote{In \cite{EKM20} the definition is via the maps $E=\P^{-1}$, which can be thought of a choice of a frame at each tangent space. Here we use $\P$ since it is more convenient for our purposes to use one-forms. Note also that in \cite{EKM20} the volume-preserving requirement is only implicit.}
\begin{definition}[Hyperelastic smooth uniform body]
A \Emph{hyperelastic smooth uniform body} is a hyperelastic body $(\M,W,\Vol)$ that can be represented as a triplet $(\M,\{U_\alpha, \P_\alpha\}_{\alpha},\calW)$, where $(U_\alpha)_{\alpha}$ is an open cover of $\M$, and $\P_\alpha:TU_\alpha \to \R^d$ are $\R^d$-valued one-forms satisfying:
\begin{itemize}
\item For any point $p\in U_\alpha$, $\P_\alpha(p):T_p\M \to \R^d$ is a volume-preserving isomorphism with respect to the volume form $\Vol(p)$ on $T_p\M$ and the Euclidean volume form $dx$ on $\R^d$ (that is, $\Vol = (\P_\alpha)^*dx$).
\item $\calW:\R^d\otimes \R^d\to (-\infty,\infty]$ is related to $W$ via $W_p = (\P_{\alpha}(p))_*\calW$, that is,
	\[
	W_p(A_p) = \calW(A_p\circ \P_\alpha^{-1}(p)),\qquad \forall p\in U_\alpha, \,\, \forall A_p\in T_p^*M\otimes \R^d.
	\]
\end{itemize}
\end{definition}

The function $\calW$ represents the mechanical behavior of the ``particle'' from which the body is composed, and thus it is often called an \Emph{archtype}. 
The pair $(U_\alpha,\P_\alpha)$ is called \Emph{reference chart}, $U_\alpha$ a \Emph{reference neighborhood} and $\P_\alpha$ a \Emph{reference map} \cite[Definition~2.7]{Wan67}.
The reference map $\P_\alpha(p)$ represents how the archtype $\calW$ is implanted in the material point $p$; its inverse $\P_\alpha^{-1}$ is therefore sometimes called an \Emph{implant map}.
$\{\P_\alpha\}_\alpha$ also induce a Riemannian metric $G$ on $\M$, via 
\beq\label{eq:intrinsic_metric}
G_p(X,Y) = \P_\alpha(p)(X)\cdot \P_\alpha(p)(Y), \qquad p \in U_\alpha, \,\, X,Y\in T_p\M,
\eeq
where $\cdot$ is the Euclidean inner-product on $\R^d$.
This metric is well-defined (independent of $\alpha$) for solid bodies \cite[Definition~6]{EKM20}.
In the context of non-Euclidean elasticity, where $G$ is typically an elemental object, $\P_\alpha$ are often called prestrain maps.
In other contexts, the map $\P_\alpha$ may also play the role of the plastic strain in Kr\"oner's decomposition  (or, more historically-accurately, Bilby--Kr\"oner--Lee decomposition \cite{SY17}), hence the use of the letter $\P$. See \cite[\S2]{KM23} for further discussion.

Let $p\in U_\alpha$, and let $\{e_i\}_{i=1}^d$ be the standard basis of $\R^d$. 
We can think of $(\P_{\alpha}^{-1}(p)[e_i])_{i=1}^d$ as being lattice directions at the point $p$.
Motivated by this, we call the collection $(U_\alpha,\P_\alpha)_\alpha$ the \Emph{lattice structure} of $\M$.

The representation of $(\M,W,\Vol)$ as $(\M,\{U_\alpha, \P_\alpha\}_{\alpha},\calW)$ is certainly non-unique.
Note that our assumption that $\M$ can be covered by a single chart does not imply that a single, globally defined smooth map $\P:T\M\to \R^d$ that represents the elastic behavior exists. 
As we shall see, the presence of a disclination prevents this possibility.

\begin{definition}[symmetry group]
\label{def:isotropy}
Let $(\M,W,\Vol)$ be a uniform, smooth, hyperelastic body. 
The \Emph{symmetry group} of the body associated with an archetype $\calW$ of $(\M,W,\Vol)$ is a group $\mathcal{G} \le \GL(d)$, defined by
\[
\calW(B\circ g) = \calW(B) \quad \text{for every $B\in \R^d\otimes \R^d$ and $g\in \mathcal{G}$}.
\]
The body is called a \Emph{solid} if there exists a $\calW$ such that $\mathcal{G} \le \SO(d)$.
We shall only consider solid bodies, and only consider such $\calW$, called $\calW$ \Emph{undistorted}, as admissible.
\end{definition}

\begin{example}\label{ex:archetypes}
\begin{enumerate}
\item \Emph{Isotropic materials} are ones for which the symmetry group is exactly $\SO(d)$. 
	For example, neo-Hookean archetypes such as 
	\[\calW(B) = \|B\|^2 + (\det B -1)^2 \quad \text{or} \quad \calW(B) = \dist^2(B, \SO(d))\]
	 are isotropic.
\item An example for a two-dimensional archetype with a discrete symmetry group is 
	\[\calW(B) = \sum_{k=0}^{n-1}(|Bv_k| - 1)^2 + (\det B - 1)^2\]
	for some $n\in \mathbb{N}$ and $v_k = (\cos (k\pi/2n), \sin (k\pi/2n))$. For $n=2$ we obtain a square symmetry, and for $n=3$ a hexagonal symmetry.
\end{enumerate}
Note that for $d=2$ a solid is either isotropic or has discrete symmetry group.
\end{example}

Note that if the body can be represented as $(\M,\{U_\alpha, \P_\alpha\}_{\alpha},\calW)$, and $p\in U_\alpha\cap U_\beta$ for some $\alpha,\beta$, then 
\beq\label{eq:symmetry_group}
\P_{\alpha}(p)\circ \P_\beta^{-1}(p) \in \mathcal{G}.
\eeq
Indeed, for every $B\in \R^d\otimes \R^d$, we have
\[
\calW(B \circ \P_{\alpha}\circ \P_\beta^{-1}(p)) = W_p(B \circ \P_{\alpha}(p)) = \calW(B \circ \P_{\alpha}\circ \P_{\alpha}^{-1}(p)) = \calW(B).
\] 
The relation \eqref{eq:symmetry_group} implies that the lattice directions at a point are only defined up to the symmetry group of the body.

Finally, we recall the definition of a material connection, which is the global object that represents how ``lattice directions" (i.e., tangent vectors) at different material points correspond to each other.
In the geometric paradigm, the connection is the elemental object in the description of a solid.
\begin{definition}[Material connection]
\label{def:material_connection}
A \Emph{material connection} of $(\M,W)$ is an affine connection $\nabla$ on $\M$ whose parallel-transport operator $\Pi$ leaves $W$ invariant.
That is, for every $p,q\in \M$, $A\in T^*_q\M\otimes \R^d$ and path $\gamma$ from $p$ to $q$,
\beq\label{eq:def_mat_connection}
W_p(A\circ \Pi_\gamma) = W_q(A),
\eeq
where $\Pi_\gamma:T_p\M\to T_q\M$ is the parallel transport along $\gamma$.
\end{definition}

In general, a material connection may fail to exist, or may not be unique. 
In fact, for isotropic materials, any connection that is metrically-consistent with the metric $G$ defined in \eqref{eq:intrinsic_metric} is a material connection.
However, if the isotropy group is discrete, it admits a unique connection, which is locally-flat (has zero Riemannian curvature; it can still have a non-trivial holonomy) \cite[Proposition 1]{EKM20}.
That is, at least locally, the body admits local parallelism: a frame field at the vicinity of each material point that is parallel (with respect to the material connection), and represents how lattice directions change in the body.
As we will see in Proposition~\ref{prop:material_connection} below, the existence of a locally-flat material connection extends to bodies with disclinations and dislocations.

\section{Bodies with discrete defects}\label{sec:defects}
A body is \Emph{Euclidean} if we can consider the body manifold as a subset $\W$ of $\R^d$ and write its associated energy as
\[
E(f) = \int_\W \calW(df)\,dx.
\]
That is, if we can represent it as $(\W,\{\W,\id\},\calW)$ for an undistorted $\calW$.
Note that if $F: \M \to \W$ is a diffeomorphism, we can change variables and obtain,
\[
E(f) = \int_\M \calW(df' \circ dF^{-1})\, F^*dx
\]
where $f' = f\circ F$, and $F^*dx$ is the pullback of $dx$.
Thus, $(\M,\{\M,dF\},\calW)$ is also a representation of the same body.

Bodies with discrete defects are multiply-connected bodies that locally (i.e., everywhere except for the defects' loci) look like Euclidean bodies.
By the above discussion, we obtain that it is a body whose reference maps are locally differentials --- exact forms.
Since a locally-exact form is a closed form, this amounts to requiring that the reference maps are closed forms:
\begin{definition}\label{def:defects}
\begin{enumerate}
\item \Emph{A body with discrete defects} (disclinations and dislocations) is a uniform, smooth, hyperelastic body $(\M,\{U_\alpha, \P_\alpha\}_{\alpha},\calW)$, such that for any $\alpha$, $\P_\alpha$ is closed, i.e., $d\P_\alpha = 0$.

\item A sub-body $\M'\subset \M$ has \Emph{no disclinations} if one can add to the lattice structure a reference neighborhood $(\M',\P')$ with $\P'$ being closed.

\item A sub-body $\M'\subset \M$ has \Emph{no defects} if  one can add to the lattice structure a reference neighborhood $(\M',\P')$ with $\P'$ being exact, i.e., $P' = dF$, for some $F:\M'\to \R^d$.
In particular, every simply-connected sub-body contains no defects.
\end{enumerate}
\end{definition}

We will motivate these definitions below.
Note that Noll \cite{Nol58} assumes that the body can be covered by a single reference neighborhood, which is not sufficient for admitting disclinations; thus we need the more general definition of a body due to Wang \cite{Wan67}.

\begin{proposition}[Existence of material connection]
\label{prop:material_connection}
The Levi-Civita connection of the Riemannian metric \eqref{eq:intrinsic_metric} associated with a body with discrete defects is a material connection. It is locally-flat, and has a trivial holonomy on every sub-body with no disclinations.
\end{proposition}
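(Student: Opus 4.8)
The plan is to exploit the key consequence of the defect condition $d\P_\alpha=0$: that the metric $G$ of \eqref{eq:intrinsic_metric} is \emph{locally} a pullback of the flat Euclidean metric, so that its Levi-Civita connection is flat with an explicitly computable parallel transport. First I would fix a simply-connected open set $V\subset U_\alpha$. Since the $\R^d$-valued one-form $\P_\alpha$ is closed, the Poincar\'e lemma (applied component-wise) yields a map $F_\alpha:V\to\R^d$ with $dF_\alpha=\P_\alpha$; because $\P_\alpha(p)$ is a pointwise isomorphism, $F_\alpha$ is a local diffeomorphism by the inverse function theorem. Comparing \eqref{eq:intrinsic_metric} with the definition of the pullback then gives $G|_V=F_\alpha^*\euc$, where $\euc$ is the Euclidean metric. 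Hence $(V,G)$ is isometric to an open subset of flat $\R^d$, which proves local-flatness. Moreover, in the coordinates supplied by $F_\alpha$ the Christoffel symbols vanish, so parallel transport is path-independent inside $V$ and reads
\[
\Pi_\gamma=\P_\alpha(q)^{-1}\circ\P_\alpha(p)\qquad\text{for any path $\gamma$ from $p$ to $q$ inside $V$,}
\]
this being the statement that Euclidean parallel transport is trivial after identifying each tangent space of $\R^d$ with $\R^d$.

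Next I would verify the material-connection identity \eqref{eq:def_mat_connection}, first locally and then globally. For $p,q\in V$ the representation $W_p=(\P_\alpha(p))_*\calW$ together with the formula for $\Pi_\gamma$ gives a one-line computation: since $\Pi_\gamma\circ\P_\alpha(p)^{-1}=\P_\alpha(q)^{-1}$, one has $W_p(A\circ\Pi_\gamma)=\calW(A\circ\P_\alpha(q)^{-1})=W_q(A)$. To pass to an arbitrary path $\gamma$ from $p$ to $q$ I would use compactness of $\gamma([0,1])$ to cover it by finitely many simply-connected patches and pick a partition $0=t_0<\dots<t_N=1$ so that each $\gamma([t_{i-1},t_i])$ lies in a single patch. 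Writing $p_i=\gamma(t_i)$ and $\Pi_i$ for the transport along the $i$-th segment, parallel transport composes as $\Pi_\gamma=\Pi_N\circ\cdots\circ\Pi_1$, and applying the local invariance to each factor in turn chains together to $W_p(A\circ\Pi_\gamma)=W_q(A)$. This establishes that the Levi-Civita connection is a material connection.

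For the holonomy statement I would use that a sub-body $\M'$ with no disclinations carries a single \emph{global} closed reference map $\P'$ (Definition~\ref{def:defects}), which, since the body is a solid, induces the same metric $G$ on $\M'$ via \eqref{eq:intrinsic_metric}. Repeating the local analysis with $\P'$ on each simply-connected patch of $\M'$ and subdividing a loop $\gamma$ based at $p$ at points $p_0=p,\dots,p_N=p$ gives $\Pi_\gamma=\Pi_N\circ\cdots\circ\Pi_1$ with $\Pi_i=\P'(p_i)^{-1}\circ\P'(p_{i-1})$. The decisive point is that every factor is built from the \emph{same} map $\P'$, so the composition telescopes to $\P'(p_N)^{-1}\circ\P'(p_0)=\P'(p)^{-1}\circ\P'(p)=\id$; hence the holonomy is trivial.

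I expect the main obstacle to be precisely this last telescoping, and in particular understanding why it fails in general. When the consecutive segments live in different reference charts $\alpha,\beta$, the junction factor at a shared point $p_i$ is not the identity but $\P_\alpha(p_i)\circ\P_\beta(p_i)^{-1}$, which by \eqref{eq:symmetry_group} lies in $\mathcal{G}$; the surviving product of such factors is exactly the holonomy and need not be trivial. Thus triviality of the holonomy is recovered exactly when a global closed reference map exists on the sub-body, i.e.\ when there are no disclinations, and the proof must make visible that the obstruction to collapsing these junction factors is what a disclination measures.
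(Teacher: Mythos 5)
Your proposal is correct and follows essentially the same route as the paper's proof: local exactness of the closed form $\P_\alpha$ via the Poincar\'e lemma, identification of $G$ as a local pullback of the Euclidean metric (hence local flatness), the explicit formula $\Pi_\gamma=\P_\alpha(q)^{-1}\circ\P_\alpha(p)$ proved by subdividing and composing parallel transports, and triviality of holonomy on a disclination-free sub-body obtained by telescoping the same formula for a single global closed reference map $\P'$. Your closing observation---that across distinct charts the telescoping leaves junction factors $\P_\alpha(p_i)\circ\P_\beta(p_i)^{-1}\in\mathcal{G}$, which is exactly what a disclination measures---is not needed here but correctly anticipates Proposition~\ref{prop:disclination_and_symmetry}.
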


\begin{proof}
Denote by $\g$ be the Riemannian metric associated with the body, and let $p\in U_\alpha$.
Then, for some neighborhood $V\subset U_\alpha$ of $p$, we have that $\P_\alpha|_V$ is exact, hence $\P_{\alpha}|_{V} = dF$ for some $F : V \to \R^d$.
Thus, $\g|_V$ is simply the pullback of the Euclidean metric to $V$ via $F$, and its Levi-Civita connection is the pullback of the Euclidean one; in particular, it has zero curvature, that is, it is locally-flat.
We now claim that for any curve $\gamma\subset U_\alpha$ between points $p$ and $q$, the parallel transport of this connection is given by
\beq\label{eq:Pi_gamma}
\Pi_\gamma = \P_\alpha(q)^{-1}\P_\alpha(p).
\eeq
Indeed, we can partition $\gamma$ as $\gamma = \gamma_n *\ldots *\gamma_1$, where each $\gamma_i$ is contained in a simply-connected open set $V_i\subset U_\alpha$. 
We then have
\beq\label{eq:concatenation}
\Pi_\gamma = \Pi_{\gamma_n} \circ \ldots \circ \Pi_{\gamma_1}.
\eeq
Thus it is sufficient to prove \eqref{eq:Pi_gamma} for each $\gamma_i$, but this just follows from the fact that $\P_\alpha|_{V_i}=dF$ is exact and thus the parallel transport is the pullback of the (trivial) Euclidean parallel transport via $F$.
Formula \eqref{eq:Pi_gamma} now shows that for any closed curve $\gamma\subset U_\alpha$ based in $p$, the parallel transport satisfy $\Pi_\gamma = \id_{T_p\M}$, that is, the holonomy of $\gamma$ is trivial.
By definition, this extends to all closed curves in sub-bodies that contain no disclinations.

It remains to show that $\nabla$ is a material connection.
It is immediate from \eqref{eq:Pi_gamma} that \eqref{eq:def_mat_connection} holds for any $\gamma\subset U_\alpha$.
By partitioning a curve $\gamma$ into $\gamma_i \subset U_{\alpha_i}$ and using \eqref{eq:concatenation}, it follows in full generality.
\end{proof}

The Levi-Civita connection of a body with discrete defects enables us to quantify the disclination and dislocation content in the body:
\begin{definition}[disclination content and Burgers vectors]\label{def:defects_content}
Let $(\M,\{U_\alpha, \P_\alpha\}_{\alpha},\calW)$ be a body with discrete defects. 
Denote its associated Levi-Civita connection by $\nabla$, and its parallel transport operator along a curve $\gamma:[s,t]\to \M$ by $\Pi_{\gamma(s)}^{\gamma(t)}$.
Let $\gamma:[0,1]\to \M$ be a piecewise smooth closed curved, based at $p\in \M$.
\begin{enumerate}
\item The \Emph{disclination content} associated with $\gamma$ is the map
	\beq\label{eq:disclination_content}
	c_\gamma :=\Pi_{\gamma(0)}^{\gamma(1)} \in T_p^*\M\otimes T_p\M,
	\eeq
	that is, the holonomy associated with the curve $\gamma$.
	We say that $\gamma$ has zero disclination content if its disclination content is $\id_{T_p\M}$.
\item If $\gamma$ has zero disclination content, then its associated \Emph{Burgers vector} (dislocation content) is the vector
	\beq\label{eq:Burgers}
	b_\gamma := \int_0^1 \brk{\Pi_{\gamma(0)}^{\gamma(t)}}^{-1}\dot\gamma(t)\,dt \in T_p\M.
	\eeq
\end{enumerate}
\end{definition}
Since the holonomy of $\nabla$ is locally trivial, it follows that the disclination content of a curve is homotopy-invariant, and it is trivial if the curve is contained in a sub-body with no disclinations.
The Burgers vector of a curve is also homotopy-invariant among curves that have zero disclination content. 
The formula \eqref{eq:Burgers} makes sense for general bodies with discrete defects, however it is then not a homotopy-invariant --- indeed, this is a manifestation of the well-known fact that in the presence of disclinations, the Burgers vector around a defect core depends on the chosen Burgers circuit (see also \cite{KMS15}).

\begin{remark}
\begin{enumerate}
\item It follows from Proposition~\ref{prop:material_connection} that the disclination content of any closed curve $\gamma$ that is contained in a sub-body with no disclinations (according to Definition~\ref{def:defects}) is trivial.
	It thus follows that a body that contains disclinations cannot be represented by a single (smooth) reference chart.
\item Furthermore, for a sub-body containing no defects (according to Definition~\ref{def:defects}), the Burgers vector of any closed curve contained in it is zero.
	Indeed, in this case the connection $\nabla$ on the sub-body $\M'$ is simply the pullback of the Euclidean connection on $\R^d$ by an immersion $F:\M'\to \R^d$.
\end{enumerate}
This discussion motivates the nomenclature of Definition~\ref{def:defects}.
\end{remark}

\begin{remark}
There are cases which, according to Definition~\ref{def:defects}, the body contains no defects, yet, one might consider them as having defects: 
for example, if $\M\subset \R^2$ is an annulus, and $\P$ is a global reference map that is the derivative of a double cover map $f:\M\to\M$. 
Even though the holonomy in this case is trivial, one can think of this body as having a negative disclination of excess angle $-2\pi$.
In order to rule out these cases one should additionally require (at least in 2D) that for sub-body to contain no defects, each generator of the fundamental group of the body has to have turning number $1$ (see, e.g., \cite[Theorem~5]{KMS15}, \cite[Definition~3.1]{KM23}).
However, these subtleties are not essential to the rest of this paper, for which the simpler Definition~\ref{def:defects} is sufficient.
\end{remark}

We now show that the disclination content always lies in the symmetry group of the body:
\begin{proposition}\label{prop:disclination_and_symmetry}
Let $(\M,\{U_\alpha, \P_\alpha\}_{\alpha},\calW)$ be a body with discrete defects, and let $\gamma:[0,1]\to \M$ be a closed curve based at $p$, with $p\in U_\alpha$ for some $\alpha$.
Then its disclination content $c_\gamma$ satisfies
\[
\P_\alpha(p) \circ c_\gamma \circ \P_\alpha^{-1}(p) \in \mathcal{G},
\]
where $\mathcal{G}\subset \SO(d)$ is the symmetry group of $\calW$.
\end{proposition}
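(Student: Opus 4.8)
The plan is to reduce the proposition to the two facts already in hand: the explicit formula for parallel transport inside a single reference chart, $\Pi_\gamma = \P_\alpha(q)^{-1}\P_\alpha(p)$ from \eqref{eq:Pi_gamma}, and the transition relation $\P_\alpha(p)\circ\P_\beta^{-1}(p)\in\mathcal{G}$ from \eqref{eq:symmetry_group}. Since $c_\gamma$ is by definition the holonomy, once it is written in terms of reference maps it becomes a product of such transition maps, each lying in $\mathcal{G}$; the result then follows because $\mathcal{G}$ is a group.

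First I would partition the loop. As $\gamma([0,1])$ is compact and $\{U_\alpha\}$ is an open cover, a Lebesgue-number argument yields $0 = t_0 < t_1 < \cdots < t_n = 1$ such that each segment $\gamma_i := \gamma|_{[t_{i-1},t_i]}$ lies in a single reference neighborhood $U_{\alpha_i}$. Writing $p_i := \gamma(t_i)$, we have $p_0 = p_n = p$, and I would arrange (refining the partition if needed) that $p_0,p_n\in U_\alpha$ as well. By construction each junction point $p_i$ lies in $U_{\alpha_i}\cap U_{\alpha_{i+1}}$.

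Next I would assemble the holonomy. Using the concatenation rule \eqref{eq:concatenation} together with \eqref{eq:Pi_gamma} on each segment,
\[
c_\gamma = \Pi_{\gamma_n}\circ\cdots\circ\Pi_{\gamma_1} = \bigl[\P_{\alpha_n}(p_n)^{-1}\P_{\alpha_n}(p_{n-1})\bigr]\cdots\bigl[\P_{\alpha_1}(p_1)^{-1}\P_{\alpha_1}(p_0)\bigr].
\]
Conjugating by $\P_\alpha(p)$ and using $p_0=p_n=p$, the map $\P_\alpha(p)\circ c_\gamma\circ\P_\alpha^{-1}(p)$ is a product of $2n+2$ reference-map factors, which I regroup into $n+1$ consecutive pairs:
\[
\underbrace{\bigl[\P_\alpha(p)\P_{\alpha_n}(p)^{-1}\bigr]}_{\text{at }p}\underbrace{\bigl[\P_{\alpha_n}(p_{n-1})\P_{\alpha_{n-1}}(p_{n-1})^{-1}\bigr]}_{\text{at }p_{n-1}}\cdots\underbrace{\bigl[\P_{\alpha_1}(p)\P_\alpha(p)^{-1}\bigr]}_{\text{at }p}.
\]
Each bracketed pair evaluates two reference maps at a common point lying in the intersection of the corresponding neighborhoods, so by \eqref{eq:symmetry_group} every pair belongs to $\mathcal{G}$. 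Since $\mathcal{G}$ is a group, the product of these $n+1$ elements is again in $\mathcal{G}$, which is exactly the claim.

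I expect the only real care needed is bookkeeping: checking that the telescoped factors pair up precisely at points where two neighborhoods genuinely overlap (guaranteed by the partition) and that the endpoint conjugation closes the product correctly at $p$. There is no deep obstacle here — the entire content is carried by the local formula \eqref{eq:Pi_gamma} and the group relation \eqref{eq:symmetry_group}, and the argument is essentially a discrete telescoping.
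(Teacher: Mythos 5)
Your proof is correct and follows essentially the same route as the paper's: partition the loop into segments each lying in a single reference neighborhood, apply the in-chart transport formula \eqref{eq:Pi_gamma} segment-by-segment, conjugate by $\P_\alpha(p)$, and regroup the telescoping product into transition factors of the form $\P_{\alpha_i}(p_i)\circ\P_{\alpha_{i-1}}^{-1}(p_i)$, each in $\mathcal{G}$ by \eqref{eq:symmetry_group}, so that closure of the group finishes the argument. The only difference is that you make explicit the Lebesgue-number partition and the endpoint bookkeeping, which the paper leaves implicit.
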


\begin{proof}
Write $\gamma$ as $\gamma = \gamma_n *\ldots *\gamma_1$, where $\gamma_i \subset U_{\alpha_i}$, starting at a point $p_i$ and ending at a point $p_{i+1}$, where $p_1 = p_{n+1} = p$.
By the definition of the connection associated with the lattice structure, we have that
\[
c_\gamma = \P_{\alpha_n}^{-1}(p_{n+1})\circ \P_{\alpha_n}(p_{n}) \circ \ldots \circ  \P_{\alpha_2}^{-1}(p_3)\circ \P_{\alpha_2}(p_2) \circ  \P_{\alpha_1}^{-1}(p_2)\circ \P_{\alpha_1}(p_1) ,
\]
and thus
\[
\P_\alpha(p) \circ c_\gamma \circ \P_\alpha^{-1}(p) = \brk{\P_{\alpha}(p)\circ \P_{\alpha_n}^{-1}(p)}\circ \ldots \circ  \brk{\P_{\alpha_2}(p_2) \circ  \P_{\alpha_1}^{-1}(p_2)}\circ \brk{\P_{\alpha_1}(p) \circ \P_\alpha^{-1} (p)}.
\]
By \eqref{eq:symmetry_group}, all the terms in brackets in the righthand side belong to $\mathcal{G}$, which completes the proof.
\end{proof}

\begin{example}\label{ex:disclination}[A single disclination]
A two-dimensional body with one disclination (of positive curvature) can be obtained by removing a sector from a two-dimensional annulus, and gluing the opposite edges of the sector.
Start with a domain 
\[
\Omega = \{(r\cos \theta, r\sin \theta) ~:~ r \in (r_0,r_1),\, \theta \in (0,2\pi\alpha)\}\subset \R^2
\]
for some $r_1>r_0\ge 0$, and $\alpha \in (0,1)$.
Endow this body with the trivial implant map $\P_0=\id$.
Now, glue the edges $\theta =0$ and $\theta = 2\pi \alpha$ by considering the map 
\[
\chi(r\cos\theta,r\sin\theta) = \brk{r\cos \frac{\theta}{\alpha},r\sin \frac{\theta}{\alpha}}.
\]
Its image is the smooth body 
\[
\M=\{(r\cos \vp, r\sin \vp) ~:~ r \in (r_0,r_1),\, \vp \in \mathbb{S}^1\},
\]
and the image of the reference map is
\[
\begin{split}
\P_1 &= d(\chi^{-1}) 
	= e_1 \otimes \brk{\cos(\alpha\vp)\,dr - \alpha r\sin(\alpha\vp)\,d\vp} +  e_2 \otimes \brk{\sin(\alpha\vp)\,dr + \alpha r\cos(\alpha\vp)\,d\vp} \\
\end{split}
\]
where $(e_1,e_2)$ is the standard orthonormal frame in $\R^2$.
This map is defined as a smooth reference map on $U_1 = \{(r\cos \vp, r\sin \vp) ~:~ r \in (r_0,r_1),\, \vp \in (0,2\pi)\}$, but not on $\M$.
To complete the lattice structure, we can similarly define a map $\P_2$ using the same expression as $\P_1$,
but define it on the reference neighborhood $U_2 = \{(r\cos \vp, r\sin \vp) ~:~ r \in (r_0,r_1),\, \vp \in (-\pi,\pi)\}$.
Both $\P_1$ and $\P_2$ are exact, and in particular, closed.
Now, to complete the definition of the elastic body we need to specify the archetype $\calW$.
Since we assume that the body is smooth, we need the map 
\[
W(\cdot) = \calW(\cdot \circ \P_1^{-1})
\] 
to be smooth also on the ray $\{\vp = 0\}$, which reduces to requiring 
\[
\P_1|_{\vp = 0^-} \circ  \P_1^{-1}|_{\vp = 0^+}\in \mathcal{G}.
\]
Since 
\[
\begin{split}
\P_1|_{\vp=0^+} &= e_1 \otimes dr + e_2 \otimes \alpha r \,d\vp = e_1 \otimes dx + \alpha e_2\otimes dy = \brk{\begin{matrix} 1 & 0 \\ 0 & \alpha\end{matrix}}\\
\P_1|_{\vp=0^-} &= e_1 \otimes \brk{\cos(2\pi \alpha)\,dr - \alpha r\sin(2\pi\alpha)\,d\vp} +  e_2 \otimes \brk{\sin(2\pi\alpha)\,dr + \alpha r\cos(2\pi\alpha)\,d\vp} \\
	&= \brk{\begin{matrix} \cos(2\pi \alpha) &-\alpha \sin(2\pi\alpha) \\ \sin(2\pi\alpha) & \alpha \cos(2\pi\alpha)\end{matrix}},
\end{split}
\]
we must have 
\[
\P_1|_{\vp = 0^-} \circ  \P_1^{-1}|_{\vp = 0^+} 
	= \brk{\begin{matrix} \cos(2\pi \alpha) &  -\sin(2\pi\alpha) \\ \sin(2\pi\alpha) &  \cos(2\pi\alpha)\end{matrix}}\in \mathcal{G}.
\]
That is, $\mathcal{G}$ contains a rotation by $2\pi \alpha$.
This same condition also arises by using $\P_2$ instead of $\P_1$. 
Since the holonomy of the Levi-Civita connection in this case is also a rotation by $-2\pi\alpha$, this is just a manifestation of the result of Proposition~\ref{prop:disclination_and_symmetry}, deduced directly from the requirement that the body will be a smooth body. 
In other words, not only the gluing of $\Omega$ via $\chi$ should be smooth, but the gluing of the triplet $(\Omega, \{\Omega, \P_0\},\calW)$.
Now, for an isotropic archetype (Example~\ref{ex:archetypes}(1)) this poses no restriction on $\alpha$, however the archetype has a discrete hexagonal symmetry as in Example~\ref{ex:archetypes}(2), it restricts $\alpha$ to be a multiple of $\frac{1}{6}$.
\end{example}

\begin{example}\label{ex:dislocation}[A single dislocation \cite{KM23}]
A two-dimensional body with a single edge-dislocation, whose Burgers vector is of magnitude $\e$ can be obtained as $(\M,\{\M,\P\},\calW)$ where
\[
\M=\{(r\cos \vp, r\sin \vp) ~:~ r \in (\e,r_1),\, \vp \in [0,2\pi]\},
\]
and
\[
\P = \id + \frac{\e}{2\pi}e_1 \otimes d\vp.
\]
It is obvious that $\P$ is closed (yet not exact), hence this body is a body with discrete defects and no disclinations.
Since, for any $r_0\in (\e,r_1)$, we have
\[
\int_{r=r_0} \P = \e e_1,
\]
the Burgers vector of this body is the vector field $\e\P^{-1}(e_1)$, meaning that for any simply connected, positively-oriented curve $\gamma$ based at $p$, its Burgers vector is $b_\gamma = \e(\P(p))^{-1}(e_1)$.
Note that, unlike the previous example, this construction is consistent with any archetype $\calW$, regardless of its symmetry group.
The geometry of this example is the same as the traditional Volterra constructions edge dislocations, obtained either by removing of a strip from a two dimensional body and gluing its edges, or by cutting the body along a ray and gluing it after translating one side. This uniqueness of the two-dimensional dislocation geometry is shown in \cite[Theorem~3.3]{KM23}.
\end{example}

\section{Discussion --- homogenization}\label{sec:discussion}
The description of bodies with discrete defects is rather unambiguous, ever since Volterra \cite{Vol07} classified different types of disclinations and dislocations (though these terms were coined later).
Their magnitudes are typically obtained by the holonomy and Burgers vector procedures described in Definition~\ref{def:defects_content}, or equivalent ways (e.g., their discrete counterparts on a defected lattice).
The infinitesimal versions of these procedures, when applied to a general affine connection $\nabla$ on $\M$, are the Riemann curvature tensor and the torsion tensor of $\nabla$ (see \cite[Chapter III, \S2, \S4]{Sch54} for precise statements and proofs, also \cite[p.~366]{EKM20}).
This observation led Bilby et al.~\cite{BBS55} to model a continuous distribution of dislocation as an affine connection with torsion; Wang \cite{Wan67} then related this model to the mechanical behavior via the notion of material connection, as described above.
The association of the curvature tensor of the connection as a continuous distribution of disclination appeared later, in linear \cite{DW73a,DW73b,Kro81} and nonlinear \cite{MR02,RG17} geometric theories.

While these observations are suggestive and elegant, they are not a rigorous derivation of a continuous distribution of defects model from the discrete one; this derivation is still, to a large extent, missing.
For dislocations (in 2D) this homogenization was obtained in a series of works culminating in \cite{EKM20}. 
Their results can be summarized as follows:
 \begin{enumerate}
 \item It is possible to obtain any smooth two-dimensional body with dislocations $(\M,\g,\nabla)$, where $\nabla$ is a flat, metric connection with torsion, as a limit of bodies $(\M^n, \P^n)$ with finitely many dislocations in the sense of Definition~\ref{def:defects}.
 The limit is obtained as the dislocations get smaller and denser, and $\nabla$ is the limit of the Levi-Civita (material) connections associated with $\P_n$.
 \item Under some technical assumptions on an archetype $\calW$, the energy $E_{(\M^n,W^n,\Vol^n)}$ associated with $(\M^n, \P^n,\calW)$ converges, in a sense of $\Gamma$-convergence, to an energy $E_{(\M,W,\Vol)}$ with the same archetype $\calW$. 
 The limit connection $\nabla$ is a material connection of the limit energy functional.
 \item If the archetype is isotropic, the limit material connection $\nabla$ is not unique, and one cannot obtain from $E_{(\M,W,\Vol)}$ the connection $\nabla$ or its torsion tensor (dislocation density).
 In fact, $E_{(\M,W,\Vol)}$ depends only on the archetype $\calW$ and the metric $\g$.
 \end{enumerate}  
 
A homogenization result for disclinations should follow the same line: first, obtaining a body $(\M,\g,\nabla)$ with a non-flat $\nabla$ (whose curvature represents the disclination density according to the geometric paradigm), as a limit of bodies $(\M^n,\{U_\alpha^n, \P_\alpha^n\}_{\alpha},\calW)$ with discrete defects. 
Then, obtain an energy $E_{(\M,W,\Vol)}$ as a limit of $E_{(\M^n,W^n,\Vol^n)}$, where $E_{(\M,W,\Vol)}$ has an archetype $\calW$ and a material connection $\nabla$.
 The discussion below will show that this program has serious limitations:
 \begin{claim}\label{claim:main}
 \begin{enumerate}
 \item If $\calW$ has discrete symmetry group $\mathcal{G}$ (crystalline materials), then one cannot obtain a non-flat $\nabla$ as a limit of the Levi-Civita connections of bodies with discrete defects $(\M^n,\{U_\alpha^n, \P_\alpha^n\}_{\alpha},\calW)$.
 	This is in sharp contrast to dislocations, where the most complete homogenization result is obtained in this case.
 \item If $\calW$ is isotropic, then it is possible to obtain a non-flat material connection $\nabla$ of a body $(\M,\{U_\alpha, \P_\alpha\}_{\alpha},\calW)$, as a limit of the Levi-Civita connections of bodies with discrete defects $(\M^n,\{U_\alpha^n, \P_\alpha^n\}_{\alpha},\calW)$; it is further possible that the associated energies $E_{(\M^n,W^n,\Vol^n)}$ $\Gamma$-converge to $E_{(\M,W,\Vol)}$.
 However, one cannot obtain from the limit energy $E_{(\M,W,\Vol)}$ the connection $\nabla$, its curvature tensor, or whether $E_{(\M,W,\Vol)}$ was obtained as homogenization limit of disclinations, dislocations, or both.
 \end{enumerate}
 \end{claim}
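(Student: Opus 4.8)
The plan is to treat the two parts separately, both resting on Proposition~\ref{prop:disclination_and_symmetry}.

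\textbf{Part 1.} I would argue by contradiction: suppose the Levi--Civita connections $\nabla^n$ of bodies $(\M^n,\{U^n_\alpha,\P^n_\alpha\}_\alpha,\calW)$ with discrete defects converge to a non-flat connection $\nabla$. For each $n$, Proposition~\ref{prop:disclination_and_symmetry} says the disclination content (holonomy) $c^n_\gamma$ of any closed loop satisfies $\P^n_\alpha(p)\circ c^n_\gamma\circ(\P^n_\alpha)^{-1}(p)\in\mathcal{G}$. Since $\mathcal{G}\le\SO(d)$ is discrete it is finite, so the eigenvalues realized by its elements form a finite subset of the unit circle; being conjugate to an element of $\mathcal{G}$, each $c^n_\gamma$ has its eigenvalues in this same finite set. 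Hence there is $\delta>0$, depending only on $\mathcal{G}$, such that any non-identity holonomy of any $\nabla^n$ has an eigenvalue at distance $\ge\delta$ from $1$. I would then use that parallel transport solves a linear ODE in the connection coefficients, so holonomies around a fixed loop converge, $c^n_\gamma\to c_\gamma$, and by continuity of eigenvalues the limit holonomy $c_\gamma$ also has all eigenvalues in that finite set. Since the $\nabla^n$ are metric and converge, $\nabla$ is the Levi--Civita connection of the limit metric, so $c_\gamma\in\SO(T_p\M,\g_p)$; as $\nabla$ is smooth, the holonomy around loops $\gamma_s$ at $p$ shrinking to $p$ tends to $\id$, and being orthogonal with all eigenvalues in a finite set having a gap $\delta$ at $1$, it must equal $\id$ for all small $s$. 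Because the curvature is the leading term of the holonomy around small loops, this forces $R(\nabla)(p)=0$ for every $p$, so $\nabla$ is flat --- a contradiction.

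\textbf{Part 2, existence and $\Gamma$-convergence.} For isotropic $\calW$ one has $\mathcal{G}=\SO(d)$, so Proposition~\ref{prop:disclination_and_symmetry} imposes no restriction on $c_\gamma$ and disclinations of arbitrarily small angle are admissible (Example~\ref{ex:disclination} with $\alpha$ small). I would construct the target $(\M,\g,\nabla)$ with non-flat $\nabla$ by approximating a prescribed curved metric $\g$ on a $2$-dimensional $\M$ by flat-cone (polyhedral) metrics $\g^n$ carrying many small disclinations at the cone points --- the standard geodesic-dome approximation of a curved surface. As the triangulation is refined the angle deficits densify and, by Gauss--Bonnet, their total converges to the integral of the Gaussian curvature; thus $\g^n\to\g$ and the locally-flat material connections $\nabla^n$ of Proposition~\ref{prop:material_connection} converge to the non-flat Levi--Civita connection $\nabla$ of $\g$, whose curvature encodes the limiting disclination density. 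The $\Gamma$-convergence of the energies would then follow from $\g^n\to\g$ together with the algebraic fact below, along the lines of the dislocation homogenization of \cite{EKM20}.

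\textbf{Part 2, non-identifiability.} The key observation is that for isotropic $\calW$ the energy density depends on the lattice structure only through the metric $\g$. Indeed, the general change of reference map preserving $\g$ via \eqref{eq:intrinsic_metric} is $\tilde\P_\alpha=R_\alpha\P_\alpha$ with $R_\alpha(p)\in\SO(d)$, and right-$\SO(d)$-invariance of $\calW$ gives $\calW(A\circ\tilde\P_\alpha^{-1})=\calW(A\circ\P_\alpha^{-1}\circ R_\alpha^{-1})=\calW(A\circ\P_\alpha^{-1})$, so $W$, and hence $E_{(\M,W,\Vol)}$, is a functional of $(\g,\calW)$ alone. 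On the other hand, as recalled after Definition~\ref{def:material_connection}, for isotropic materials every metric connection of $\g$ is a material connection, and distinct such connections generally differ in curvature and torsion. Thus from $E_{(\M,W,\Vol)}$ one recovers $\g$ but neither $\nabla$ nor its curvature (disclination density) or torsion (dislocation density): the same $\g$ arises from the torsion-free Levi--Civita connection (a pure-disclination limit) and from metric connections with torsion (disclination-plus-dislocation limits), and when $\g$ is flat, from the trivial connection (no defects) as well as from flat connections with torsion (pure dislocations).

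\textbf{Main obstacle.} The delicate point is Part 1: I must fix the mode of convergence $\nabla^n\to\nabla$ under which holonomies around fixed loops converge and the eigenvalue quantization passes to the limit, and handle the possibly varying base manifolds $\M^n$ --- most cleanly by identifying them with a fixed $\M$ in a single chart, as assumed throughout, and working with Christoffel symbols there. The remaining analytic content, the $\Gamma$-convergence in Part 2, is an adaptation of the machinery of \cite{EKM20} and should be routine once the metric convergence $\g^n\to\g$ is established.
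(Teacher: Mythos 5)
Your overall strategy matches the paper's: part 1 rests on Proposition~\ref{prop:disclination_and_symmetry} together with the fact that a smooth non-flat connection has nontrivial holonomies arbitrarily close to the identity on short loops, and part 2 rests on the piecewise-flat (polyhedral) approximation of a curved metric, with isotropy making the cone-point implant maps of Example~\ref{ex:disclination} admissible and mutually compatible, plus the observation that for isotropic $\calW$ the energy is a functional of $(\g,\calW)$ alone. Your eigenvalue version of the gap argument in part 1 is, if anything, a more robust implementation than the paper's (which phrases the gap as $|g_1-g_2|>c$ between elements of $\mathcal{G}$): eigenvalues are invariant under conjugation by $\P_\alpha^n(p)$, so you need no uniform control on the reference maps along the sequence.

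There is, however, a genuine flaw in your non-identifiability step. You attach the label ``pure dislocations'' only to the case of flat $\g$ (``when $\g$ is flat, \ldots from flat connections with torsion (pure dislocations)''), and you describe metric connections with torsion over a curved $\g$ as ``disclination-plus-dislocation limits''. This gets the dislocation theory backwards, and it is exactly the point where the paper invokes \cite{KM15,KM16,EKM20}: locally-Euclidean bodies containing \emph{only} dislocations (no disclinations in the sense of Definition~\ref{def:defects}) homogenize to bodies whose limit metric $\g$ is an arbitrary --- in particular curved --- Riemannian metric, the limiting material connection being the flat-with-torsion (Weitzenb\"ock) connection compatible with $\g$; the curvature of $\g$ emerges in the limit even though every approximant is locally flat. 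This external input is what lets the paper conclude the claim at full strength: the same energy $E_{(\M,W,\Vol)}$, hence the same curved $\g$, arises as a limit of pure disclinations, of pure dislocations, or of both, so none of these scenarios can be read off from the energy. Under your labeling, observing a curved limit metric would \emph{exclude} the pure-dislocation scenario, which both contradicts the cited results and leaves the ``dislocations'' alternative in the statement unproven. The repair is small given your framework: since $\M$ is covered by a single chart it is parallelizable, so any $\g$ admits a global $\g$-orthonormal frame, and declaring it parallel yields a flat metric connection with torsion; by \cite{KM16,EKM20} this connection is realized as the limit of material connections of dislocation-only bodies whose energies converge to the same $E_{(\M,W,\Vol)}$.
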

 
We start with the case of discrete symmetry group.
The following is a corollary of Proposition~\ref{prop:disclination_and_symmetry}:
\begin{corollary} 
Let $\calW$ be a solid, undistorted archetype with a discrete symmetry group $\mathcal{G}$.
\begin{enumerate}
\item 
The dislocation content of any curve is any body with discrete defects $(\M,\{U_\alpha, \P_\alpha\}_{\alpha},\calW)$ is either zero or is bounded away from the identity.
\item Let $\M$ be a manifold, and $(\M^n,\{U_\alpha^n, \P_\alpha^n\}_{\alpha},\calW)$ be a sequence of bodies with disclinations such that $\M^n\subset \M$. 
Then the Levi-Civita connections $\nabla^n$ associated with $\M^n$ cannot converge to a non-flat, smooth connection $\nabla$ on $\M$, for any notion of convergence that implies convergence of the parallel transport (i.e., of the lattice directions).
\end{enumerate}
\end{corollary}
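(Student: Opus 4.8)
The plan is to derive both parts from Proposition~\ref{prop:disclination_and_symmetry} together with a single topological observation: a discrete subgroup of the compact group $\SO(d)$ is finite, hence isolated from the identity. Part (1) is then immediate, and Part (2) follows by contradiction, using the fact that a non-flat connection admits arbitrarily small but nontrivial holonomies, which cannot survive as limits of the $\delta$-gapped holonomies supplied by Part (1).

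For part (1), recall from Proposition~\ref{prop:disclination_and_symmetry} that for a closed curve $\gamma$ based at $p\in U_\alpha$, the disclination content satisfies $\P_\alpha(p)\circ c_\gamma\circ \P_\alpha^{-1}(p)\in\mathcal{G}$. Since $\mathcal{G}\le\SO(d)$ is discrete and $\SO(d)$ is compact, $\mathcal{G}$ is finite, so there is $\delta=\delta(\calW)>0$ with $\|g-\id\|\ge\delta$ for every $g\in\mathcal{G}\setminus\{\id\}$ (each nonidentity $g$ has an eigenvalue bounded away from $1$). I would then record that this gap is conjugation-invariant: because $\P_\alpha(p)$ is an isometry of $(T_p\M,\g_p)$ onto Euclidean $\R^d$ by \eqref{eq:intrinsic_metric}, the map $c_\gamma$ is $\g_p$-orthogonal and conjugate to $g:=\P_\alpha(p)\,c_\gamma\,\P_\alpha^{-1}(p)$, so $c_\gamma$ and $g$ share eigenvalues and the same $\g_p$-operator distance to $\id$. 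Hence either $c_\gamma=\id$ or $\|c_\gamma-\id\|_{\g_p}\ge\delta$, with $\delta$ depending only on $\calW$.

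For part (2), I would argue by contradiction. Suppose $\nabla^n\to\nabla$ with $\nabla$ non-flat and smooth, in a sense forcing convergence of parallel transport. Non-flatness gives a point $p$ and tangent vectors $X,Y$ with $R^\nabla(X,Y)\ne 0$; taking small coordinate loops $\gamma_\e$ spanning $X,Y$, the holonomy expansion $c_{\gamma_\e}=\id-\e^2 R^\nabla(X,Y)+o(\e^2)$ shows that for all small $\e$ the limiting disclination content $c_{\gamma_\e}$ is nontrivial yet $0<\|c_{\gamma_\e}-\id\|<\delta$. Fix one such loop $\gamma$, which lies in $\M^n$ for large $n$. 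By part (1), its disclination content $c^n_\gamma$ relative to $\nabla^n$ is either $\id$ or at distance $\ge\delta$ from $\id$, while convergence of parallel transport gives $c^n_\gamma\to c_\gamma$. But a sequence each of whose terms is either $\id$ or $\delta$-far from $\id$ cannot converge to a $c_\gamma$ that is simultaneously $\ne\id$ and strictly within $\delta$ of $\id$. This contradiction proves the claim.

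The main obstacle is the bookkeeping in part (2) that makes the gap metric-independent: each $c^n_\gamma$ is orthogonal for its own metric $\g^n_p$, and the identifications of tangent spaces vary with $n$. I would handle this by phrasing the gap through conjugation-invariant data: since $c^n_\gamma$ is conjugate to an element of the finite group $\mathcal{G}$, its eigenvalues lie in the finite eigenvalue set of $\mathcal{G}$, which avoids a fixed punctured neighborhood of $1$ away from the identity; eigenvalues are continuous under convergence of parallel transport, so the eigenvalues of $c_\gamma$ also avoid that neighborhood, contradicting that $c_\gamma\ne\id$ has all eigenvalues arbitrarily close to $1$. A secondary point to check is that $R^\nabla(X,Y)\ne0$ genuinely yields nontrivial, not merely small, holonomy for small loops; this holds because the leading term dominates, so the logarithm of $c_{\gamma_\e}$ equals $-\e^2 R^\nabla(X,Y)+o(\e^2)\ne0$ for small $\e$.
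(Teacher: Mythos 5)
Your proof is correct and takes essentially the same route as the paper: part (1) is deduced from Proposition~\ref{prop:disclination_and_symmetry} plus the uniform gap around the identity in a discrete subgroup of $\SO(d)$, and part (2) follows from part (1) because a smooth non-flat connection has nontrivial holonomy arbitrarily close to $\id_{T_p\M}$ on sufficiently short loops, which is incompatible with convergence of the gapped holonomies. Your extra bookkeeping (finiteness of $\mathcal{G}$, transfer of the gap through the isometry $\P_\alpha(p)$, and the eigenvalue argument making the gap independent of the varying metrics $\g^n$) simply fills in details the paper's two-sentence proof leaves implicit.
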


\begin{proof}
The first part follows immediately from the proposition, as a discrete $\mathcal{G}$ implies that for any two distinct elements $g_1,g_2\in \mathcal{G}$, $|g_1-g_2|>c$ for some $c>0$.
The second part follows from the first, as a smooth connection with non-zero curvature tensor at $p\in \M$ must have a nontrivial holonomy, yet arbitrarily close to $\id_{T_p\M}$, on sufficiently short closed curves based at $p$.
\end{proof}

Thus, we cannot hope for a homogenization theory of continuous distribution of disclinations for bodies with discrete symmetry groups (e.g., crystalline materials), under the assumption that their geometric structures corresponds to their mechanical behavior.
This concludes the first part of Claim~\ref{claim:main}.

We now describe, without getting into technical details, how the second part can be obtained.
Fix an isotropic archetype $\calW$.
Consider a body $\M\subset \R^2$, with a smooth Riemannian metric $G$ on it, and let $\nabla$ be its Levi-Civita connection.
Now construct a sequence of piecewise flat Riemannian manifolds as follows (see also \cite[\S2.3.2, Example 3]{KM16a}):
Triangulate $(\M,\g)$ with geodesics of length of order $1/n$, such that the angle in each geodesic triangle is uniformly bounded away from $0$ or $\pi$. 
Replace each triangle with a Euclidean triangle of the same edge lengths, and remove the vertices. 
The result is a smooth Riemannian manifold $(\M^n,G^n)$, which is locally flat and whose Levi-Civita connection $\nabla^n$ has holonomy (i.e., disclination content) around the removed vertices.
In the vicinity of each vertex $(\M^n,G^n)$ is isometric to a body with a single disclination as described Example~\ref{ex:disclination}. Thus, one can pullback the implant maps from there to these sub-bodies.
Since $\calW$ is isotropic, these implant maps of the neighborhoods of all the vertices are compatible with each other, and together they form elastic bodies with discrete defects $(\M^n,\{U_\alpha^n, \P_\alpha^n\}_{\alpha},\calW)$ whose intrinsic metrics are $\g^n$ and their Levi-Civita (material) connections are $\nabla^n$.
We can easily embed $\M^n$ into $\M$ almost isometrically, by mapping the corresponding geodesic triangles.
Thus, we can assume that $\M^n\subset \M$. 
By construction, for any curve $\gamma$ that lies in $\bigcap_n \M^n$, the parallel transport along $\gamma$ with respect to $\nabla^n$ converges to the parallel transport with respect to $\gamma$, and furthermore, the metrics $\g^n$ converge uniformly to $\g$ \cite{KM16a}.
Thus, it follows from the $\Gamma$-convergence results of \cite{KM16a,EKM20} that $E_{(\M^n,W^n,\Vol^n)}$ $\Gamma$-converge to $E_{(\M,W,\Vol)}$, provided that $\calW$ satisfies some technical assumptions (namely quasiconvexity and standard growth conditions).

However, due to the non-uniqueness of material connections in isotropic materials, the same energy $E_{(\M,W,\Vol)}$ can arise as a homogenization limit of bodies with discrete defects that contain no disclinations. 
In such cases, the material connections $\tilde{\nabla}^n$ may converge to a different, curvature-free material connection $\tilde{\nabla}$ (as demonstrated in the results of \cite{EKM20} discussed earlier). 
Consequently, if we only have access to the mechanical behavior---the energy functional $E_{(\M,W,\Vol)}$---of an isotropic elastic body $(\M,W,\Vol)$, \emph{it is impossible to distinguish whether the observed behavior originates from the homogenization of bodies containing disclinations, dislocations, or both}. 
In particular, the curvature and torsion tensors of the limiting material connection do not manifest in the energy functional. 
This concludes the proof of Claim~\ref{claim:main}.

In summary, we have demonstrated that, unlike dislocations, obtaining models of continuous distribution of disclinations is limited to materials with a continuous isotropic group. In such cases, it is impossible to discern from the mechanical behavior whether the response arises from distributed disclinations or to dislocations.
Thus, from this perspective, the mechanical interpretation of the curvature tensor of a connection, as a descriptor of continuous distribution of disclinations, is unclear.

\begin{remark}
This paper discusses disclinations and dislocations, leaving point-defects of various kinds outside its scope.
A rigorous derivation of homogenization limits of these defects remains an open problem (in a geometric paradigm, an attempt towards it appears in \cite{KMR18}).
In fact, even the identification of continuous distribution of point-defects as non-metricity of a connection is less clear cut than the identification of dislocations and disclinations with torsion and curvature \cite[p.~304]{Kro81}.
Moreover, unlike torsion and curvature, the non-metricity tensor does not align with the constitutive paradigm of Noll and Wang.
Indeed, in the framework described in \S2, material connections are always metrically-consistent with the intrinsic metric \cite[Proposition~2]{EKM20}.
As a result, the very framework within which the geometry and mechanical behavior of bodies with continuously distributed point-defects---arising as limits of bodies with finitely many defects---can be rigorously described, remains unclear.
\end{remark}


{\footnotesize
\providecommand{\bysame}{\leavevmode\hbox to3em{\hrulefill}\thinspace}
\providecommand{\MR}{\relax\ifhmode\unskip\space\fi MR }
\providecommand{\MRhref}[2]{%
  \href{http://www.ams.org/mathscinet-getitem?mr=#1}{#2}
}
\providecommand{\href}[2]{#2}

}

\end{document}